\definecolor{darkblue}{rgb}{0.0, 0.0, 0.55}
\renewcommand{\selectlanguage}[1]{}
	\definecolor{BLACK}{gray}{0}
	\definecolor{WHITE}{gray}{1}
	\definecolor{RED}{rgb}{1,0,0}
	\definecolor{GREEN}{rgb}{0,1,0}
	\definecolor{BLUE}{rgb}{0,0,1}
	\definecolor{CYAN}{cmyk}{1,0,0,0}
	\definecolor{MAGENTA}{cmyk}{0,1,0,0}
	\definecolor{YELLOW}{cmyk}{0,0,1,0}
\theoremstyle{plain}
\theoremstyle{plain}
\newenvironment{proof}[1][\protect\proofname]{\par
	\normalfont\topsep6\p@\@plus6\p@\relax
	\trivlist
	\itemindent\parindent
	\item[\hskip\labelsep
	\scshape
	#1]\ignorespaces
}{%
	\endtrivlist\@endpefalse
}
\providecommand{\proofname}{Proof}
\theoremstyle{plain}
\providecommand{\lemmaname}{Lemma}
\providecommand{\definitionname}{Definition}
\providecommand{\propositionname}{Proposition}
\definecolor{myurlcolor}{rgb}{0,0,0.7}
\newcommand{\id}{{\operatorname{id}}}
\def\ket#1{| #1 \rangle}
\def\bra#1{\langle  #1 |}
\newcommand{\mS}{\mathcal{S}}
\newcommand{\sH}{\mathcal{H}}
\newcommand{\mE}{\mathcal{E}}
\newcommand{\mR}{\mathcal{R}}
\newcommand{\haH}
\newtheorem{theorem}{Theorem}
\newtheorem{definition}{Definition}
\newcommand{\sq}{{\operatorname{sq}}}
\definecolor{orange}{RGB}{255,127,0}
\newcommand{\PTr}[2]{\operatorname{Tr}_{#1}\!\left\{{#2}\right\}}
\newcommand{\mF}{\mathcal{F}}
\newcommand{\mD}{\mathcal{D}}
\renewcommand{\leq}{\leqslant}
\renewcommand{\geq}{\geqslant}
\renewcommand{\le}{\leqslant}
\renewcommand{\ge}{\geqslant}
\begin{document}
\title{Causal and Non-Causal Revivals of Information:\\A New Regime of Non-Markovianity in Quantum Stochastic Processes}

\author{Francesco Buscemi}
\email{buscemi@nagoya-u.jp}
\affiliation{Department of Mathematical Informatics, Nagoya University, Furo-cho, Chikusa-ku, 464-8601 Nagoya, Japan}

\author{Rajeev Gangwar}
\affiliation{Department of Physical Sciences, Indian Institute of Science Education and Research (IISER), Mohali, Punjab 140306, India}

\author{Kaumudibikash Goswami}
\email{goswami.kaumudibikash@gmail.com}
\affiliation{QICI Quantum Information and Computation Initiative, Department of Computer Science, The University of Hong Kong, Pokfulam Road, Hong Kong}

\author{Himanshu Badhani}
\affiliation{The Institute of Mathematical Sciences, C. I. T. Campus, Taramani, Chennai 600113, India}
\affiliation{Homi Bhabha National Institute, Training School Complex, Anushakti Nagar, Mumbai 400094, India}
\affiliation{Center for Security, Theory and Algorithmic Research (CSTAR), Centre for Quantum Science and Technology (CQST), International Institute of Information Technology, Hyderabad, Gachibowli, Telangana 500032, India}

\author{Tanmoy Pandit}
\affiliation{Fritz Haber Research Center for Molecular Dynamics, Hebrew University of Jerusalem, Jerusalem 9190401, Israel}

\author{Brij Mohan}
\affiliation{Department of Physical Sciences, Indian Institute of Science Education and Research (IISER), Mohali, Punjab 140306, India}

\author{Siddhartha Das}
\email{das.seed@iiit.ac.in}
\affiliation{Center for Security, Theory and Algorithmic Research (CSTAR), Centre for Quantum Science and Technology (CQST), International Institute of Information Technology, Hyderabad, Gachibowli, Telangana 500032, India}

\author{Manabendra Nath Bera}
\email{mnbera@gmail.com}
\affiliation{Department of Physical Sciences, Indian Institute of Science Education and Research (IISER), Mohali, Punjab 140306, India}

\begin{abstract}
The study of information revivals, witnessing the violation of certain data-processing inequalities, has provided an important paradigm in the study of non-Markovian quantum stochastic processes. Although often used interchangeably, we argue here that the notions of ``revivals'' and ``backflows'', i.e., flows of information from the environment back into the system, are distinct: an information revival can occur without any backflow ever taking place. In this paper, we examine in detail the phenomenon of non-causal revivals and relate them to the theory of short Markov chains and squashed non-Markovianity. We also provide an operational condition, in terms of system-only degrees of freedom, to witness the presence of genuine backflow that cannot be explained by non-causal revivals. As a byproduct, we demonstrate that focusing on processes with genuine backflows, while excluding those with only non-causal revivals, resolves the issue of non-convexity of Markovianity, thus enabling the construction of a convex resource theory of genuine quantum non-Markovianity.

\end{abstract}

\maketitle

\section{Introduction}

The evolution of a system interacting with its surrounding environment, also known as open quantum system dynamics~\cite{Breuer_book_OQS2007}, can be divided into two main categories: Markovian and non-Markovian. Broadly speaking, there are two approaches to the study of quantum non-Markovianity, proposed separately by Rivas--Huelga--Plenio (RHP)~\cite{Rivas2010} and Breuer--Laine--Piilo (BLP)~\cite{Breuer2009}. According to RHP, the evolution of an open quantum system is Markovian if it is \textit{divisible}, i.e., if it can be represented as a composition of steps from one moment in time to the next. This approach is conceptually very close to the idea of semigroups~\cite{alicki-lendi-semigroups}. The other approach, proposed by BLP, is based on the distinguishability of quantum states. Given that Markovian dynamics never increases the distinguishability of quantum states in time, BLP propose that non-Markovian dynamics is associated with an increase in distinguishability, which is interpreted as \textit{revival} of information~\cite{Breuer2009,Buscemi2016a,megier-prl-2021,Smirne2021connectionbetween,vacchini-2022-review,vacchini-2024-comparison-review}.

Thus, by interpreting a decrease in distinguishability as a loss of information, the emergence of non-Markovian effects, according to BLP, seems to imply a \textit{backflow} of information from the environment into the system. This idea later motivated the study of non-Markovianity not only in terms of increase in distinguishability but more generally in terms of revival of correlations between the system, the environment, and a reference, where correlations are quantified using mutual information~\cite{Luo2012}, conditional mutual information~\cite{Huang2021}, entanglement~\cite{Kolodynski2020,das2018fundamental}, interferometric power~\cite{Dhar2015, Souza2015}, and other variants~\cite{Chen2016,das2018fundamental}.

The notion of information revival has also been explored in situations where the environment is limited to some special forms of physical interest~\cite{Alipour2012, Fanchini2014, Haseli2014, Santis2020}. For example, for \textit{classical} environments, it has been shown that a revival of information is possible with the environment unaffected by the system's dynamics~\cite{Franco2012, Budini2018}. Furthermore, attempts have been made to distinguish the classical and quantum contributions to information revival and non-Markovianity~\cite{Milz2020, Giarmatzi2021, Banacki2023, Backer2024}. Nevertheless, the terms ``revivals'' and ``backflows'' are still used interchangeably in the literature.

In this paper, we argue that the concepts of information revival and information backflow can be distinguished and that distinguishing between them can lead to a clearer understanding of what non-Markovianity actually means. More specifically, we find that it is possible for revivals to occur without any actual backflow of information from the environment into the system. Upon closer examination from a causal perspective, we find that such revivals are \textit{non-causal}, in the sense that they can be ``unraveled'' using hidden degrees of freedom that have never interacted with the system and are thus completely causally separated from it. We derive an information-theoretic condition, formulated in terms of quantum conditional mutual information, which is equivalent to an arbitrary system-environment interaction giving rise only to non-causal revivals. This condition is related to the concept of squashed non-Markovianity, a notion recently introduced in~\cite{Gangwar2024}. Furthermore, we show that the condition we identify is robust to small errors and respects convexity.

In contrast to non-causal revivals, a dynamics has a \textit{genuine backflow} if the revival cannot be explained without resorting to some information flowing back and forth from the environment to the system. Remarkably, in some situations our information-theoretic approach allows us to conclude that an information revival involves a genuine backflow of information just by looking at the system's degrees of freedom, without any reference to the environment. In particular, the fine-grained notion of information revivals that we propose here also allows us to solve the well-known problem related to the non-convex character of Markovianity~\cite{Chitambar2019}: while convex mixtures of Markovian processes can lead to non-Markovianity, convex mixtures of non-causal revivals are necessarily non-causal. This observation provides the basis for developing a convex resource theory for non-Markovianity with genuine backflows, after non-causal revivals have been removed.

The article is structured as follows: In Sec.~\ref{sec:Background}, we present a generic model of dynamical processes involving a system and its environment, where the reduced dynamics of the system can be either Markovian or non-Markovian, depending on the presence or absence of information revival. Information revival, often linked to the violation of the data-processing inequality, is explained in Sec.~\ref{sec:InfoRevival} with a motivating example. Sec.~\ref{sec:non-CausalRevivals} addresses the distinction between information revival and information backflow from the environment to the system in non-Markovian evolutions and characterizes non-causal revivals without information backflow. The conditions for non-causal revivals are generalized in Sec.~\ref{sec:GeneralCase}. In Sec.~\ref{sec:OpCondition}, we introduce operational conditions for genuine information backflow, which cannot be attributed to non-causal revivals, in terms of system-only degrees of freedom. Sec.~\ref{sec:ConvexResource} shows that quantum non-Markovianity based on genuine information backflow is convex, suggesting the potential for developing a consistent resource theory framework to characterize quantum non-Markovian dynamics as quantum resource. Finally, we conclude in Sec.~\ref{sec:Conclusion}.

\section{Background: processes VS models \label{sec:Background}}

The setting is the usual one in open systems dynamics~\cite{Breuer_book_OQS2007}: a quantum system $Q$ with $d$-dimensional ($d<\infty$) Hilbert space $\sH_Q$, which can be initialized in any density operator (i.e., state) $\rho_Q\in\mS(\sH_Q)$, interacts with an environment (also assumed to be finite-dimensional), which is instead initialized in an arbitrary but fixed state $\gamma_E\in\mS(\sH_E)$. The interaction is modeled as a bipartite unitary operator $U_{QE}:\sH_Q\otimes\sH_E\to \sH_{Q'}\otimes\sH_{E'}$, which can be parameterized by time, so to describe a joint evolution. In general, since the system is open and particles can be exchanged, we only assume that $\sH_Q\otimes\sH_E\cong \sH_{Q'}\otimes\sH_{E'}$, while the local dimensions are allowed to change. As we impose no restrictions on the system's initial state, following a common convention in quantum information theory, we introduce a reference system $R$ with $\sH_R\cong\sH_Q$, and assume that their initial state is pure and maximally entangled, i.e., $\Phi^+_{RQ}$ such that $\PTr{R}{\Phi^+_{RQ}}=d^{-1}\openone_Q$~\cite{note_reference}.

After introducing the reference system, we consider three ``snapshots'' of the joint tripartite reference-system-environment configuration, taken at three different times $t_0<t_1<t_2$. Without loss of generality, we can write 
\begin{align}
    &\rho_{RQE}=\Phi^+_{RQ}\otimes\gamma_E &t=t_0\;,\label{eq:3-point-conf-0}\\ 
    \xrightarrow{t_1}\ &\sigma_{RQ'E'}=U_{QE}\;\rho_{RQE}\;U^\dag_{QE} &t=t_1\;,\label{eq:3-point-conf-1}\\
    \xrightarrow{t_2}\ &\tau_{RQ''E''}=V_{Q'E'}\;\sigma_{RQ'E'}\;V^\dag_{Q'E'} &t=t_2\;.\label{eq:3-point-conf-2}
\end{align}
The unitaries $U_{QE}$ and $V_{Q'E'}$ govern the time evolutions in the first ($t_0 \to t_1$) and the second ($t_1 \to t_2$) steps, respectively. In what follows, we will call the above sequence of tripartite states a \textit{three-time snapshot} for the dynamics at hand: it provides the minimal framework for discussing information revivals in both discrete and continuous time. Generalizations to more than three points in time are straightforward, but for the purposes of the present discussion, three-point snapshots will suffice.

By tracing over the environment, we obtain the reduced reference-system dynamics
\begin{align}\label{eq:open-three-time}
    \Phi^+_{RQ}\xrightarrow{t_1} \sigma_{RQ'}\xrightarrow{t_2} \tau_{RQ''}\;.
\end{align}
Since the initial state of the environment is fixed, it is possible to represent the above sequence using the formalism of \textit{quantum channels}, i.e., completely positive trace-preserving (CPTP) linear maps: the theory guarantees the existence of two quantum channels $\mE_{Q\to Q'}$ and $\mF_{Q\to Q''}$ such that $\sigma_{RQ'}=(\id_R\otimes\mE_{Q\to Q'})(\Phi^+_{RQ})$  and $\tau_{RQ''}=(\id_R\otimes\mF_{Q\to Q''})(\Phi^+_{RQ})$. However, the existence of an intermediate channel $\mD_{Q'\to Q''}$ such that $\tau_{RQ''}=(\id_R\otimes\mD_{Q'\to Q''})(\sigma_{RQ'})$ is \textit{not} guaranteed, since at time $t=t_1$ system and environment are in general correlated~\cite{Pechukas1994,Buscemi2014}. But if such a channel exists, then the three-time open system dynamics in~\eqref{eq:open-three-time} is called \textit{divisible}.

From an operational point of view, the reduced dynamics in Eq.~\eqref{eq:open-three-time} is all that is directly accessible to the observer. In other words, while Eq.~\eqref{eq:open-three-time} provides the \textit{operational} description of the process, the three-time snapshot in Eqs.~\eqref{eq:3-point-conf-0}--\eqref{eq:3-point-conf-2} provides an \textit{epistemic} model of it. Correspondingly, we say that a three-time model~\eqref{eq:3-point-conf-0}--\eqref{eq:3-point-conf-2} is \textit{consistent} with a three-time process~\eqref{eq:open-three-time} if the latter is obtained by taking the partial trace of the former. Of course, many different epistemic models can be consistent with the same operational process. This distinction between the operational process (i.e., the system's reduced dynamics) and the epistemic model (i.e., the specific choice of system--environment interaction mechanism) plays an important role in what follows.

We focus on three entropic measures of information: von Neumann entropy, quantum mutual information (QMI), and quantum conditional mutual information (QCMI), which are defined as follows. Given a system $A$ in state $\rho_A$, its von Neumann entropy is $H(A)_{\rho}=-\Tr{\rho_A\log\rho_A}$. Given a bipartite system $AB$ in state $\rho_{AB}$, its QMI is $I(A;B)_{\rho}=H(A)_{\rho}+H(B)_{\rho}-H(AB)_{\rho}$. The QMI $I(A;B)$ provides an operationally well-defined measure of the total amount of correlations existing between systems $A$ and $B$~\cite{Groisman2005}. Finally, given a tripartite system $ABC$ in state $\rho_{ABC}$, its QCMI is $I(A;C|B)_{\rho}=I(A;BC)_{\rho}-I(A;B)_{\rho}=H(AB)_\rho+H(BC)_{\rho}-H(B)_{\rho}-H(ABC)_{\rho}$. The von Neumann entropy, QMI, and QCMI are always nonnegative for quantum states. Optimizations of QCMI lead to the definitions of squashed entanglement~\cite{christandl-winter-2004-squashed}, squashed non-Markovianity~\cite{Gangwar2024}, and puffed entanglement~\cite{oppenheim2008paradigm}: see Appendix~\ref{App:squashed-puffed} for definitions and properties of these quantities, which will be used in what follows.

\section{Information revivals \label{sec:InfoRevival}}
Crucially, QMI satisfies the \textit{data-processing inequality}: for any bipartite state $\rho_{AB}$ and any channel $\mE_{B\to B'}$, the QMI $I(A;B')$ computed for $(\id_A\otimes\mE_{B\to B'})(\rho_{AB})$ obeys the inequality $I(A;B')\le I(A;B)$. The idea is that the correlations between two systems cannot increase as a consequence of (deterministic) local actions: a very natural requirement to be satisfied by any reasonable measure of correlations, lest the notion of locality itself be violated. 

Therefore, in any three-time sequence such as~\eqref{eq:open-three-time}, it will always hold that $I(R;Q')_1\le I(R;Q)_0$ and $I(R;Q'')_2\le I(R;Q)_0$. However, since, as we already noticed, at time $t=t_1$ system and environment are generally correlated, a channel from $Q'$ to $Q''$ may not exist, and we may observe a \textit{revival} of QMI, i.e.,
\begin{align}\label{eq:vanilla-revival}
     I(R;Q'')_2> I(R;Q')_1\;.
\end{align}
Whenever a revival occurs, we can take this as a conclusive signature for the fact that the evolution between $t_1$ and $t_2$ is non-Markovian~\cite{Luo2012}.

Before continuing our analysis, we briefly comment on the relationship between witnesses of non-Markovianity based on correlation measures~\cite{Luo2012,Buscemi2014}, like that in Eq. ~\eqref{eq:vanilla-revival}, and those based on distinguishability measures, as proposed in a seminal proposal by Breuer, Laine, and Piilo~\cite{Breuer2009}, and later developed in several papers~\cite{vacchini-2022-review,vacchini-2024-comparison-review}. Indeed, distinguishability measures are also correlation measures, albeit only classical ones: as shown in~\cite{Konig}, the logarithm of the guessing probability is the (negative of the) \textit{min-conditional entropy}---a close relative of QMI---computed with respect to a classical-quantum state representing the classical correlation between a register and the ``labels'' attached to each state to be distinguished. The equivalence between distinguishability-based witnesses, correlation-based witnesses, and divisibility has been proved in~\cite{Buscemi2016a,buscemi2018reverse-data-proc}.

Note that the presence or absence of a revival depends solely on the reduced dynamics of the system, and is thus fully operational. Nevertheless, depending on the nature of the revival, different system--environment interaction mechanisms may or may not be possible. In this work, our main objective is to investigate how the QMI between reference and system changes between times $t_1$ and $t_2$, and to relate such changes to the algebraic and information-theoretic properties of the system--environment interaction models compatible with the process, in particular, the intermediate tripartite configuration $\sigma_{RQ'E'}$ in Eq.~\eqref{eq:3-point-conf-1}.

\subsection{Explaining revivals}

Whenever a revival of correlations as in~\eqref{eq:vanilla-revival} occurs as a consequence of a \textit{local} operation on the system, instead of immediately concluding that some fundamental law of nature has been violated, it is more natural to explain the observed revival by assuming the existence of other degrees of freedom which, although interacting with the system, were not included in the balance, thus leading to an apparent revival. Accordingly, an \textit{explanation} for a revival consists of incorporating into the balance other degrees of freedom, compatible with the overall three-time process~\eqref{eq:open-three-time}, until the revival disappears.

An obvious way to explain \textit{any} revival is to consider a system-environment interaction mechanism such as \eqref{eq:3-point-conf-0}--\eqref{eq:3-point-conf-2} and include the environment itself in the balance. More precisely, instead of comparing only the correlation content of $Q'$ versus that of $Q''$, we compare $Q'E'$ versus $Q''E''$ as a whole. When we do this, since the joint system-environment dynamics is unitary, we have $I(R;Q'E')_1=I(R;Q''E'')_2$, and the anomalous revival is naturalized. Again, for the sake of conceptual clarity, we emphasize that, while the revival is operational, the explanation given for it is epistemic and depends on the system--environment interaction model chosen.

This way of explaining revivals (i.e., by incorporating the environment) lies behind the interpretation of revivals as \textit{backflows} of information from the environment into the system. The idea is that some of the initial correlations between the system and the reference were moved to the environment as a consequence of the interaction between $t_0$ and $t_1$, and later restored at time $t_2$. Therefore, the violation of data-processing inequality can be explained by tracking such displaced correlations as they move back and forth between the system and the environment.

\subsection{A motivating example}\label{sec:example}

While all revivals can be explained as backflows, not all revivals \textit{require} backflows to be explained. To illustrate this point, let us consider a concrete example, similar to those discussed in Ref.~\cite{Franco2012}, in which $\sH_Q\cong\sH_R\cong\mathbb{C}^2$, $\sH_E\cong\mathbb{C}^4$, and $\gamma_E=\openone_E/4$. The interaction between the system and environment is modeled as the repeated application of the same control-unitary operator $U_{QE}=\sum_{i=0}^3\pi^i_Q\otimes\ketbra{i}_E$, where $\pi^1_Q=X_Q$, $\pi^2_Q=Y_Q$, and $\pi^3_Q=Z_Q$ are the Pauli matrices and $\pi^0_Q=\openone_Q$.

The resulting three-time process, obtained by tracing over the environment, can be easily computed as follows:
$ \Phi^+_{RQ}\xrightarrow{t_1} \frac{\openone_R}{2}\otimes \frac{\openone_{Q'}}{2}\xrightarrow{t_2}\Phi^+_{RQ''}\;.$ Correspondingly, the correlation between the reference and the system is maximal at initial time $t_0$, vanishes at intermediate time $t_1$, and is maximal again at final time $t_2$, i.e., $I(R;Q)_{0}=2\xrightarrow{t_1} I(R;Q')_{1}=0\xrightarrow{t_2} I(R;Q'')_{2}=2$ bits. This model thus exhibits a complete revival of information. Such a revival, however, can be explained \textit{without} the need for a backflow. Such an explanation can be given, for example, by including in the picture an ancillary system $\widetilde{E}$, which is perfectly correlated with the environment but never interacts with the system and thus cannot give back to it any information. Nonetheless, such an ancillary system $\widetilde{E}$ is such that $I(R;Q\widetilde{E})_0=I(R;Q'\widetilde{E})_1=I(R;Q''\widetilde{E})_2=2$ bits, i.e., the revival is explained. A detailed discussion can be found in Appendix~\ref{app:revival}.

As a consequence, it is clear that in this case the violation of the data-processing inequality, i.e., $I(R;Q'')_2>I(R;Q')_1$, is nothing but an artifact due to the ignorance of the information residing in $\widetilde{E}$: information that was already there before $Q$ and $E$ interacted, and is completely independent of both $Q$ and $R$. Hence, we conclude that in this case no backflow of information can be inferred, despite the fact that a revival is observed. In what follows, we will make this idea more rigorous and characterize other situations, beyond the highly idealized example above, in which revivals can explained without resorting to a backflow.

\section{Non-causal revivals \label{sec:non-CausalRevivals}}

In order to generalize the above example, let us consider a three-time snapshot~\eqref{eq:3-point-conf-0}--\eqref{eq:3-point-conf-2} and extend its initial configuration as $\rho_{RQEF}=\Phi^+_{RQ}\otimes\gamma_{EF}$, where $\Tr_F[\gamma_{EF}]=\gamma_{E}$. Notice that $\gamma_{EF}$ may be mixed. The extension $F$ does not participate in the process, i.e., it is only acted upon by the identity operator, similar to the reference $R$, and like the latter it is more of a mathematical device than an actual physical system. In this case, we say that the extension $F$ is \textit{inert} from $t_0$ to $t_2$. As a consequence, the (unitarily) evolved states after the first and second steps, i.e., $\sigma_{RQ'E'F}$ and $\tau_{RQ''E''F}$, respectively, are automatically extensions of $\sigma_{RQ'E'}$ and $\tau_{RQ''E''}$, respectively; in formula, $\Tr_F[\sigma_{RQ'E'F}]=\sigma_{RQ'E'}$ and $\Tr_F[\tau_{RQ''E''F}]=\tau_{RQ''E''}$.

As mentioned earlier, without loss of generality, we shall focus on the change of quantum mutual information occurring in the second step ($t_1 \to t_2$).

\begin{definition}[Non-causal revivals]\label{def:non-causal-rev}
We say that an open system dynamics~\eqref{eq:open-three-time} exhibits a \emph{non-causal revival} of information, whenever a revival occurs, i.e., $I(R;Q'')_2>  I(R;Q')_1$, but there exists a compatible model, such as that in Eqs.~\eqref{eq:3-point-conf-0}--\eqref{eq:3-point-conf-2}, and an inert extension $F$ such that
\begin{align}\label{eq:def-sq-rev}
I(R;Q''F)_2\le I(R;Q'F)_1\;.
\end{align}
\end{definition}

The use of the term ``non-causal'' in Definition~\ref{def:non-causal-rev} is justified because revivals that can be explained in terms of an inert extension do not, by construction, require any direct backflow of information in order to be explained. Equivalently, non-causal revivals are precisely those that can be naturalized by resorting to a system that is space-like separated, and thus causally separated, from both the system and the environment.

While the property of being non-causal does not strictly belong to the dynamics of the reduced system, but to the epistemic model given for it, we nevertheless call the former non-causal whenever it can be explained by a non-causal model. This is entirely analogous to the discussion of, e.g., Bell's inequalities, where the observed correlations are called ``classical'' whenever there is a classical mechanism that could have given rise to them, even though they actually arise from genuinely quantum, albeit noisy, devices.

\subsection{Causal and non-causal models}

Suppose that an interaction model~\eqref{eq:3-point-conf-0}--\eqref{eq:3-point-conf-2} is given, in which the environment starts in a pure state, i.e., the density operator $\gamma_E$ in~\eqref{eq:3-point-conf-0} is rank-one. Then any inert extension must be trivial, i.e., in tensor product with all the rest. In this case, $I(R;Q''F)_2\le I(R;Q'F)_1$ if and only if $I(R;Q'')_2\le I(R;Q')_1$, i.e., if and only if there was no revival to begin with. In other words, any revival that occurs in the presence of a pure environment, cannot be non-causal, but requires a genuine backflow from the environment in order to be explained. Vice versa, if a revival admits a non-causal explanation, then such an explanation will necessarily involve an interaction model with a mixed-state environment. In fact, in most cases of theoretical and experimental interest, the environment is usually assumed to be at some finite temperature, thus leaving open the possibility of non-causal revivals.

The above discussion raises the question: if for an initially pure environment all revivals can only be exaplained as backflows, are there situations in which, on the contrary, \textit{all} revivals are non-causal?

In order to answer this question, let us notice that,
since the joint system-environment evolution is unitary, the quantum mutual information between the reference and the rest never changes, i.e., $I(R;QEF)_0 =I(R;Q'E'F)_1=I(R;Q''E''F)_2$. Therefore, the inequality~\eqref{eq:def-sq-rev} can also be cast in terms of quantum conditional mutual information (QCMI), so that a revival is non-causal if and only if there exists an inert extension $F$ such that
\begin{align}\label{eq:CMI-diff}
I(R;E''|Q''F)_2\ge I(R;E'|Q'F)_1\;.    
\end{align}
Note that the inequality is reversed as a result of the QMI being replaced by the QCMI. We now recall the fact that the QCMI is always non-negative. Therefore, given an arbitrary three-time snapshot~\eqref{eq:3-point-conf-0}--\eqref{eq:3-point-conf-2}, if there exists an inert extension $F$ such that $I(R;E'|Q'F)_1=0$, then Eq.~\eqref{eq:CMI-diff} is always satisfied, and any revival is non-causal.

Thanks to Uhlmann's theorem for purifications and Stinespring's dilation theorem for CPTP maps~\cite{Stinespring1955, Kraus1983}, we can say that given a mixed state $\rho_A$, any extension $\rho_{AX}$ of it can be obtained by starting from some purification $\ket{\varphi}_{AB}$ and then acting on $B$ with some channel $\mathcal{N}_{B\to X}$.
Furthermore, it can be observed that since the joint reference-system state is initially pure and the overall evolution is unitary in Eqs.~\eqref{eq:3-point-conf-0}--\eqref{eq:3-point-conf-2}, the only mixed component arises from the initial state of the environment, $\gamma_E$.
These two facts together show that any extension done at any point in time, if kept invariant through the process, provides an inert extension of the three-point snapshot.
This implies that, without loss of generality, it is possible to construct an extension for the intermediate configuration $\sigma_{RQ'E'F}$ on its own, regardless of the initial and the final configurations, and such an extension will automatically provide an inert extension for the entire three-point snapshot.

We thus reach the following conclusion, that we state as a theorem:

\begin{theorem}\label{th:zero-squashed}
Suppose that, starting from an initial configuration as in Eq.~\eqref{eq:3-point-conf-0}, the joint reference-system-environment reaches the configuration $\sigma_{RQ'E'}$ at time $t=t_1$. Then, regardless of the next interaction step $V_{Q'E'}$ in~\eqref{eq:3-point-conf-2}, all revivals possibly occurring are non-causal, if and only if there exists an inert extension $F$ such that
\begin{align}\label{eq:sq-non-markov}
    I(R;E'|Q'F)_1=0\;.
\end{align}
\end{theorem}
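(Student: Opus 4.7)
My plan is to prove the two implications separately. The backward direction is essentially the chain rule together with unitary invariance of quantum mutual information, while the forward direction rests on a single, carefully chosen unitary $V_{Q'E'}$ that transfers all of $E'$ into the outgoing system $Q''$.

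For the backward direction, assume an inert extension $F$ satisfies $I(R;E'|Q'F)_1 = 0$. For any $V_{Q'E'}$, the map $V_{Q'E'}\otimes\id_{RF}$ is unitary, so $I(R;Q'E'F)_1 = I(R;Q''E''F)_2$. Expanding both sides via the chain rule gives
\begin{equation*}
I(R;Q'F)_1 + I(R;E'|Q'F)_1 = I(R;Q''F)_2 + I(R;E''|Q''F)_2.
\end{equation*}
The hypothesis together with the nonnegativity of QCMI then yields $I(R;Q''F)_2 \le I(R;Q'F)_1$, so this single $F$ non-causally explains every revival arising from every $V$, in the sense of Definition~\ref{def:non-causal-rev}.

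For the forward direction, assume every revival is non-causal. Exploiting the paper's freedom to vary local dimensions subject to $\sH_{Q'}\otimes\sH_{E'}\cong\sH_{Q''}\otimes\sH_{E''}$, I choose $V_{Q'E'}$ to be a unitary identification of $\sH_{Q'}\otimes\sH_{E'}$ with $\sH_{Q''}$, with $\sH_{E''}$ one-dimensional. For this ``maximally absorbing'' $V$, one has $I(R;Q''F)_2 = I(R;Q'E'F)_1$ for every inert extension $F$. If $I(R;E'|Q')_1 = 0$ already, the trivial extension $F$ does the job. Otherwise this $V$ produces a strict revival since $I(R;Q'')_2 = I(R;Q')_1 + I(R;E'|Q')_1 > I(R;Q')_1$, so the non-causal hypothesis supplies an inert $F$ with $I(R;Q'E'F)_1 \le I(R;Q'F)_1$; the chain rule then rewrites this as $I(R;E'|Q'F)_1 \le 0$, and nonnegativity of QCMI forces equality.

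The step that deserves the most care is justifying the ``absorbing'' unitary used in the forward direction: one has to invoke the flexibility of local dimensions fixed in the setup and verify that, for this particular $V$, a revival is present precisely when the target identity~\eqref{eq:sq-non-markov} is not already satisfied by the trivial extension. Everything else is an accounting exercise with the chain rule and the unitary invariance of $I(R;\cdot)$; the fact that the same $F$ (produced by a single well-chosen $V$) works simultaneously for all $V$ is exactly what the backward direction guarantees, closing the equivalence.
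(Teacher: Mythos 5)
Your proof is correct and follows essentially the same route as the paper's: the backward direction via the chain rule, unitary invariance of $I(R;\cdot)$, and nonnegativity of QCMI, and the forward direction via the ``merging'' unitary with $\sH_{Q''}\cong\sH_{Q'}\otimes\sH_{E'}$ and $\sH_{E''}\cong\mathbb{C}$. Your explicit handling of the case where that particular $V$ produces no revival (so the trivial extension already gives $I(R;E'|Q'F)_1=0$) is a small point the paper's terser argument leaves implicit, but it does not change the approach.
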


\begin{proof}
One direction comes from the non-negativity of the QCMI, so that if~\eqref{eq:sq-non-markov} holds, then~\eqref{eq:CMI-diff} also holds. Vice versa, taking $V_{Q'E'}$ to be the interaction merging the entire system $E'$ with $Q'$ into $Q''$, i.e., $\sH_{Q''}\cong\sH_{Q'}\otimes\sH_{E'}$ and $\sH_{E''}\cong\mathbb{C}$, we have $I(R;E''|Q''F)=0$, and again from the positivity of the QCMI, condition~\eqref{eq:CMI-diff} implies~\eqref{eq:sq-non-markov}.
\end{proof}

Note that condition~\eqref{eq:sq-non-markov} implies that the intermediate configuration $\sigma_{RQ'E'}$ has zero squashed non-Markovianity between $R,E'$ conditional on $Q'$. This, in turn, implies that the squashed entanglement of $\sigma_{RE'}$ is zero, i.e., that $\sigma_{RE'}$ is \textit{separable}. See Appendix~\ref{App:squashed-puffed} for definitions and basic facts.

Theorem~\ref{th:zero-squashed} provides another motivation for using the term ``non-causal explanation''. As a consequence of Petz's theory of statistical sufficiency~\cite{Petz1986, Petz1988, Hayden2004}, condition~\eqref{eq:sq-non-markov} is equivalent to the existence of a channel $\mR_{Q'F\to Q'E'F}$ which can reconstruct the state $\sigma_{RQ'E'F}$ from its marginal $\sigma_{RQ'F}=\Tr_{E'}[\sigma_{RQ'E'F}]$. Subsequently, there exists a channel ${\cal N}_{Q'F \to Q''E''F}$, where
\begin{align}
   { \cal N}_{Q'F \to Q''E''F}\coloneqq  {\cal V}_{Q'E'\to Q''E''}\circ {\cal R}_{Q'F \to Q'E'F}, \label{Eq:composition_N}
\end{align}
with ${\cal V}(\cdot)\coloneqq  V(\cdot)V^\dagger$, such that  
\begin{align}
(\id_R\otimes{\cal N}_{{Q'F}\to {Q''E''F}})(\sigma_{RQ'F})=\tau_{RQ''E''F}.    
\end{align}

\noindent 
In other words, the transformation $\sigma_{RQ'}\to\tau_{RQ''}$ that leads to the observed revival can be exactly reproduced even without the environment, using the inert extension $F$, which, as mentioned above, never interacted with the system in the past and thus cannot give anything back to it: the revival is reproduced without any backflow ever occurring. See Fig.~\ref{fig:noncausal}.

\begin{figure}
    \centering
    \includegraphics[width=\columnwidth]{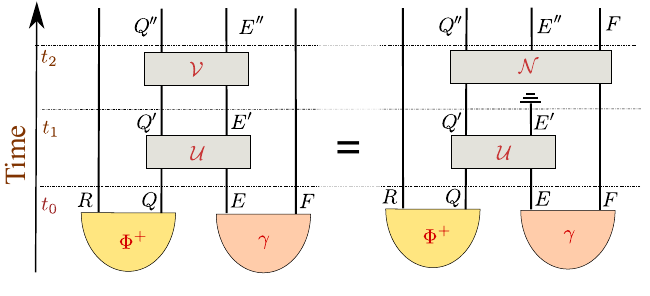}
    \caption{For non-causal models, all revivals are non-causal. At time $t=t_0$ we have the state $\rho_{RQEF}=\Phi^+_{RQ}{\otimes}\gamma_{EF}$. Here we assume that the system ($Q$) and the reference ($R$) are maximally entangled. Moreover, an inert system $F$, which does not participate in the subsequent dynamics, can be correlated with the environment ($E$) in a joint state $\gamma_{EF}$. Under the action of the unitary channel ${\cal U}_{QE}(\cdot)\coloneqq U_{QE}(\cdot)U_{QE}^\dagger$, the evolved state at time $t=t_1$ is $\sigma_{RQ'E'F}={\cal U}_{QE} \, (\rho_{RQEF})$. If $\sigma_{RQ'E'F}$ satisfies $I(R;E'|Q'F)_\sigma=0$, then for each subsequent unitary channel ${\cal V}_{Q'E'}\coloneqq V_{Q'E'}(\cdot)V_{Q'E'}^\dagger$ there exists a corresponding quantum channel ${\cal N}_{Q'F\to Q''E'' F}$, which can exactly reproduce the effect of ${\cal V}$, but without needing $E'$. Thus, an apparent backflow of information from the environment ($E'$) to the system ($Q''$) can be explained from the perspective of $Q'F$ alone, independent of the environment $E'$. The channel ${\cal N}$ is composed of the recovery map ${\cal R}_{Q'F\to Q' E' F}$, which reconstructs the state $\sigma_{RQ'E'F}$ from its marginal $\sigma_{RQ'F}$, followed by the unitary ${\cal {V}}_{Q'E'}$, see Eq.~\eqref{Eq:composition_N}. In this scenario, any information revival that occurs between times $t=t_1$ and $t=t_2$ is necessarily non-causal, as shown in Theorem~\ref{th:zero-squashed}.}
    \label{fig:noncausal}
\end{figure}

\section{The general case \label{sec:GeneralCase}}

Besides the two extreme situations, i.e., one in which the environment is initially pure and all revivals require a corresponding backflow, and the other in which the intermediate configuration satisfies~\eqref{eq:sq-non-markov} and all revivals are non-causal, what can be said about the general case?

In general, this is a very difficult question to answer since the search for the inert extension satisfying Eq.~\eqref{eq:sq-non-markov} must go through systems $F$ with no prior upper bound on their dimensions. Besides the trivial upper bounds
$\inf_FI(R;E'|Q'F)_1\le\min\{I(R;E'|Q')_1, I(R;E')_1\}\;,$
respectively obtained when $F\cong\mathbb{C}$ and when $F$ is a purification of $\sigma_{RQ'E'}$, the general case is very hard: at least as hard as the separability problem, since, as we already noticed, finding an extension that satisfies~\eqref{eq:sq-non-markov} would also show that the state $\sigma_{RE'}$ is separable.
It is therefore crucial, both practically and conceptually, to be able to address the case where one knows only that there exists an inert extension such that $I(R:E'|Q'F)_1\le\varepsilon$, for some given threshold value $\varepsilon\ge 0$. We address this in the following theorem.

\begin{theorem}\label{th:approx}
Given a three-time snapshot, Eqs.~\eqref{eq:3-point-conf-0}--\eqref{eq:3-point-conf-2}, suppose that the intermediate configuration $\sigma_{RQ'E'}$ is such that there exists an inert extension $F$ with
\begin{align}\label{eq:approx}
I(R;E'|Q'F)_1\leq \varepsilon\;,   
\end{align}
for small value $\varepsilon\ge 0$. Then, for any subsequent interaction $V_{Q'E'}$, there exists a corresponding channel $\mathcal{N}_{Q'F\to Q''E''F}$ able to provide an approximate non-causal explanation of the reduced dynamics from $t_1$ to $t_2$, in formula,
$
\mathsf{F}\Big(\tau_{R Q''E''F}, \mathcal{N}_{Q'F\rightarrow Q''E''F} \left(\sigma_{RQ'F}\right)\Big) \geq 2^{-\varepsilon}$ \;,
where $\mathsf{F}(\alpha,\beta)\coloneqq \left(\operatorname{Tr}\sqrt{\sqrt{\alpha}\beta\sqrt{\alpha}}\right)^2$ is the (squared) fidelity between states $\alpha$ and $\beta$.  
\end{theorem}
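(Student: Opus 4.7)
The plan is to prove this by leveraging the approximate recoverability theorem of Fawzi and Renner, which upgrades the Petz-recoverability used in Theorem~\ref{th:zero-squashed} to a quantitative, robust statement. Recall that this theorem states: for any tripartite state $\omega_{ABC}$, there exists a CPTP recovery map $\mathcal{R}_{B\to BC}$ such that
\begin{align}
-\log \mathsf{F}\!\left(\omega_{ABC},\, \mathcal{R}_{B\to BC}(\omega_{AB})\right) \leq I(A;C|B)_{\omega}\;.
\end{align}
Setting $A=R$, $B=Q'F$, and $C=E'$ in the intermediate configuration $\sigma_{RQ'E'F}$, the hypothesis $I(R;E'|Q'F)_1\leq\varepsilon$ immediately yields a recovery channel $\mathcal{R}_{Q'F\to Q'E'F}$ with
\begin{align}
\mathsf{F}\!\left(\sigma_{RQ'E'F},\, \mathcal{R}_{Q'F\to Q'E'F}(\sigma_{RQ'F})\right)\geq 2^{-\varepsilon}\;.
\end{align}

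Next, I would define the candidate channel exactly as in Eq.~\eqref{Eq:composition_N}, namely $\mathcal{N}_{Q'F\to Q''E''F}\coloneqq \mathcal{V}_{Q'E'\to Q''E''}\circ \mathcal{R}_{Q'F\to Q'E'F}$, so that by construction $\mathcal{N}(\sigma_{RQ'F}) = (\id_R\otimes \mathcal{V}_{Q'E'})\!\left(\mathcal{R}(\sigma_{RQ'F})\right)$, while the actual evolved state satisfies $\tau_{RQ''E''F} = (\id_R\otimes \mathcal{V}_{Q'E'})(\sigma_{RQ'E'F})$. Since $\mathcal{V}$ is unitary, the fidelity between any two inputs equals the fidelity between their images (this follows from data processing applied through both $\mathcal{V}$ and $\mathcal{V}^{\dagger}$), and therefore
\begin{align}
\mathsf{F}\!\left(\tau_{RQ''E''F},\, \mathcal{N}(\sigma_{RQ'F})\right) = \mathsf{F}\!\left(\sigma_{RQ'E'F},\, \mathcal{R}(\sigma_{RQ'F})\right) \geq 2^{-\varepsilon}\;,
\end{align}
which is precisely the claimed bound.

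The only non-trivial technical input is the Fawzi--Renner inequality itself; the rest of the argument is essentially formal, combining the definition of $\mathcal{N}$ with the isometric invariance of fidelity under the unitary second step. Conceptually, the main obstacle to sharpening this approach would lie in attempting to control the dimension of the extension $F$ realizing the near-optimal bound, since, as remarked before the theorem, the search for the optimal $F$ is at least as hard as the separability problem; however, this is not needed for the present statement, which only asserts \emph{existence} given the hypothesis. Finally, note that at $\varepsilon=0$ the Fawzi--Renner recovery coincides with the Petz recovery map used in Theorem~\ref{th:zero-squashed}, so the present bound smoothly reduces to the exact non-causal explanation in the ideal case.
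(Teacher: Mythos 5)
Your argument is correct and is essentially the paper's own proof: the paper simply cites Theorem~7 of Ref.~\cite{Buscemi2016}, which is precisely the quantitative recoverability bound (of Fawzi--Renner type) you invoke, stating that $I(R;E'|Q'F)_1\le\varepsilon$ guarantees a recovery channel $\mathcal{R}_{Q'F\to Q'E'F}$ with squared fidelity at least $2^{-\varepsilon}$. Composing it with the unitary second step as in Eq.~\eqref{Eq:composition_N} and using unitary invariance of the fidelity, exactly as you do, yields the stated bound.
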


The proof of the above theorem is the consequence of Theorem 7 in \cite{Buscemi2016}, and the converse is also known to hold, as in Theorem 8 in \cite{Buscemi2016}. Thus, we conclude that the squashed non-Markovianity of the intermediate configuration $\sigma_{RQ'E'}$ is a good indicator of how non-causal any subsequent revival can be, in the sense that small squashed non-Markovianity guarantees that any revival can be ``almost'' explained by an inert extension. Notice that, in particular, condition~\eqref{eq:approx} implies that $I(R;Q''F)_2-I(R;Q'F)_1=I(R;E'|Q'F)_1-I(R;E''|Q''F)_2\le\varepsilon$, regardless of the magnitude of the actual revival $I(R;Q'')_2-I(R;Q')_1$.

\section{Operational condition for the presence of genuine backflow \label{sec:OpCondition}}
As emphasized earlier, while an information revival is operational, its interpretation (as non-causal revival or genuine backflow) depends on the epistemic model given for the interaction. Nevertheless, there may be situations in which the presence of genuine backflow can be inferred in a model-independent way, similar to Bell's inequalities, which, if violated, guarantee the presence of quantum correlations, regardless of how the experiment was actually conducted.

Thus, we now consider the cases where the revival of information cannot entirely be explained by a non-causal revival and thus necessarily contains at least ``some'' genuine backflow of information. Let us recall Definition~\ref{def:non-causal-rev}: if there does not exist an inert extension $F$ that satisfies Eq.~\eqref{eq:def-sq-rev}, then we have a scenario where the revivals are not non-causal revivals. Equivalently:
\begin{align}
    I(R;Q''F)_2 > I(R;Q'F)_1, \quad \forall \ \mathrm{ inert} \ F\;.\label{eq:not_noncausal}
\end{align}
Using the chain rule of QCMI and considering that the quantum mutual information between $R$ and $F$ does not change, the above condition~\eqref{eq:not_noncausal} can be reformulated in terms of QCMI as:
\begin{align}
    I(R;Q''|F)_2 > I(R;Q'|F)_1 , \quad \forall \ \mathrm{ inert} \ F\;. \label{eq:not_noncausal_QCMI}
\end{align}
The above is, by definition, a necessary and sufficient condition for a revival to contain genuine information backflow. However since the notion of ``inert extension'' depends on the epistemic model chosen for the interaction, the condition indirectly depends on the knowledge of the environment at $t_0$. In what follows, we remove such a dependence from condition~\eqref{eq:not_noncausal_QCMI} and use it to derive an \textit{operational} (i.e., independent of the model chosen) condition, sufficient (albeit, in general, not necessary) to guarantee the presence of a non-zero amount of genuine backflow of information, without any reference to the environment. In what follows, we will make use of the concepts of \textit{squashed} and \textit{puffed entanglement}; see Appendix~\ref{App:squashed-puffed} for details.

\begin{theorem}\label{th:not_non-causal}
Given a three-time process as in~\eqref{eq:open-three-time}, if
\[
E_{\mathrm{sq}}(R;Q'')_2 > H(Q') _1\;,
\]
then the revival requires a non-zero amount of genuine backflow to be explained (i.e., it is causal).
\end{theorem}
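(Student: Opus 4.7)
The plan is to prove the contrapositive: assuming the revival is non-causal, derive $E_{\mathrm{sq}}(R;Q'')_2 \le H(Q')_1$. By Definition~\ref{def:non-causal-rev} and its QCMI reformulation~\eqref{eq:not_noncausal_QCMI}, non-causality means there exists an inert extension $F$ such that
\[
I(R;Q''|F)_2 \le I(R;Q'|F)_1\,.
\]

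The core idea is that a single inert $F$ plays two distinct roles at the two times. At $t_2$, because $F$ was kept invariant throughout the dynamics, the marginal state of $R,Q'',F$ is a valid extension of $\tau_{RQ''}$. Hence it is an admissible squashing system in the variational definition of squashed entanglement (recalled in Appendix~\ref{App:squashed-puffed}), yielding
\[
E_{\mathrm{sq}}(R;Q'')_2 \le \tfrac{1}{2} I(R;Q''|F)_2\,.
\]
Combining with the non-causality assumption gives $E_{\mathrm{sq}}(R;Q'')_2 \le \tfrac{1}{2} I(R;Q'|F)_1$. At $t_1$, I would then invoke the general estimate $I(A;B|C) \le 2H(B)$, obtained by writing $I(A;B|C) = H(B|C) - H(B|AC)$ and combining subadditivity ($H(B|C) \le H(B)$) with the Araki--Lieb inequality ($H(B|AC) \ge -H(B)$). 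Applied to $\sigma_{RQ'E'F}$ with $A=R$, $B=Q'$, $C=F$, this gives $\tfrac{1}{2} I(R;Q'|F)_1 \le H(Q')_1$. Chaining the three estimates completes the proof and contradicts the theorem's hypothesis.

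The main obstacle is conceptual rather than technical: one has to spot that a single inert $F$ can serve \emph{both} as a squashing system bounding $E_{\mathrm{sq}}(R;Q'')_2$ at $t_2$ and as a conditioning system yielding a clean local-entropy bound at $t_1$. All remaining ingredients --- the variational definition of $E_{\mathrm{sq}}$, subadditivity, and Araki--Lieb --- are standard. One should however keep careful track of the factor of $1/2$ in the definition of squashed entanglement adopted in Appendix~\ref{App:squashed-puffed}, since it is precisely this factor that makes the constants line up and produces the bound $H(Q')_1$ rather than $2H(Q')_1$ on the right-hand side.
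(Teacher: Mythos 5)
Your proposal is correct and is essentially the paper's own argument read contrapositively: the paper chains $E_{\mathrm{sq}}(R;Q'')_2 \le \tfrac12 I(R;Q''|F)_2$ against $\tfrac12 I(R;Q'|F)_1 \le E_{\mathrm{puff}}(R;Q')_1 \le H(Q')_1$ for every inert $F$, which is exactly your chain of estimates. The only cosmetic difference is that you establish the intermediate bound $\tfrac12 I(R;Q'|F)_1 \le H(Q')_1$ directly via subadditivity and the Araki--Lieb inequality, whereas the paper invokes the puffed-entanglement bound $E_{\mathrm{puff}}(A;B)\le\min\{H(A)_{\rho},H(B)_{\rho}\}$ of Appendix~\ref{App:squashed-puffed}, proved there by a purification argument --- the same inequality in substance.
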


\begin{proof}
We start by noticing that a \textit{sufficient} condition for Eq.~\eqref{eq:not_noncausal_QCMI} to hold is that
    \begin{align}
        E_{\mathrm{sq}}(R;Q'')_2 &\coloneqq \dfrac{1}{2} \inf_{G}I(R;Q''|G)_2 \nonumber \\ & > \dfrac{1}{2} \sup_{G}I(R;Q'|G)_1\nonumber\\ &\eqqcolon E_{\mathrm{puff}}(R;Q')_1\;. \label{eq:not_non-causal_iners_Ent}
    \end{align}
Combining the above with Eq.~\eqref{eq:puffed-bound} in Appendix~\ref{App:squashed-puffed}, we see that a sufficient condition for Eq.~\eqref{eq:not_noncausal_QCMI} is also given by
    \begin{align}
        E_{\mathrm{sq}}(R;Q'')_2 > H(Q')_1\;. \label{eq:Theorem_inquality}
    \end{align} 
Thus, whenever the inequality in Eq.~\eqref{eq:Theorem_inquality} is satisfied, genuine backflow, as in Eq.~\eqref{eq:not_noncausal_QCMI}, is guaranteed. This completes the proof.
\end{proof}

A very similar condition was obtained in Ref.~\cite{Backer2024}. However, while the condition proved in Ref.~\cite{Backer2024} is sufficient to falsify the hypothesis of a purely \textit{classical} environment, our condition excludes a purely \textit{non-causal} information revival, which is, in general, a stricter requirement. Nevertheless, this similarity suggests that there may be a close relationship between the notions of classical environments on the one hand and non-causal revivals on the other. We leave this issue open for future investigation.

Motivated by Theorem~\ref{th:not_non-causal}, now we provide an example where the presence of genuine backflow is unavoidable, regardless of the system--environment model chosen. Let us recall the three-time snapshot scenario Eqs.~\eqref{eq:3-point-conf-0}-\eqref{eq:3-point-conf-2}, and consider the initial state of the composite reference-system-environment configuration given by 
\begin{align}
    \Psi_{RQE}=\Phi^+_{RQ} \otimes \gamma_E, \quad t= t_0
\end{align}
where $R$ is the reference system, $Q$ is the system of interest, $\Phi^+_{RQ}$ is a maximally entangled state, and $E$ is the environment. Let us assume that the dimension of all the three systems involved is $d$, while $H(E)_0$ is strictly less than $\log d$.

Let us now suppose that the interaction $t_0\to t_1$ causes a perfect swap between $Q$ and $E$, and that such swap is undone between $t_1$ and $t_2$. Then, the resulting process (obtained after tracing over the environment) is:

\begin{align}\label{eq:genuine-process}
    \Phi_{RQ}^{+} \to \rho_R \otimes \gamma_{Q'} \to \Phi^{+}_{RQ''}\;. 
\end{align}
Since $\Phi_{RQ}^{+}$ is the maximally entangled state, $E_{\rm{sq}}(R;Q)_0=\log d$. The sufficient condition in Theorem~\ref{th:not_non-causal} is thus easily verified because $E_{\rm{sq}}(R;Q'')_2=E_{\rm{sq}}(R;Q)_0=\log d > H(E)_0=H(Q')_1$ is satisfied in this dynamics. Thus, a process like that in Eq.~\eqref{eq:genuine-process} \textit{must} involve some genuine exchange of information between system and environment.

\section{Towards a convex resource theory of non-Markovianity with genuine backflow \label{sec:ConvexResource}}

Let us now consider two processes, each with its own three-time snapshot, i.e.,
\[
\Phi^+_{RQ}\otimes\gamma_E^{(a)}\xrightarrow{t_1} \sigma_{RQ'E'}^{(a)}\xrightarrow{t_2} \tau^{(a)}_{RQ''E''}\;,
\]
and
\[
\Phi^+_{RQ}\otimes\gamma_E^{(b)}\xrightarrow{t_1} \sigma_{RQ'E'}^{(b)}\xrightarrow{t_2} \tau^{(b)}_{RQ''E''}\;,
\]
and let us assume both to be without revival, i.e., $I(R;Q')_1^{(a)}\ge I(R;Q'')_2^{(a)}$ and $I(R;Q')_1^{(b)}\ge I(R;Q'')_2^{(b)}$. And yet, if we consider the process obtained by convexly mixing the two, i.e.,
$
\Phi^+_{RQ}\otimes\sum_xp_x\gamma_E^{(x)}\xrightarrow{t_1} \sum_xp_x\sigma_{RQ'E'}^{(x)}\xrightarrow{t_2} \sum_xp_x\tau^{(x)}_{RQ''E''}\;,    
$
a revival may occur. This is a well-known problem with any attempt to formulate a resource theory of non-Markovianity: Markovian processes do not form a convex set~\cite{Breuer_2018-mixing-induced}.

However, if we focus on three-point snapshots with non-causal revival instead of just snapshots without revival~\cite{Note1}, then convexity is satisfied. This fact is easily shown by choosing, as the inert extension, a classical system perfectly correlated with the index $x\in\{a,b\}$. The detailed argument is shown in Appendix~\ref{App:convexity}. In other words, when mixing Markovian processes, even if a revival may occur between $t_1$ and $t_2$ as a consequence of mixing~\cite{Breuer_2018-mixing-induced}, it will necessarily be non-causal. For the same reason, mixtures of processes with non-causal revivals will also be non-causal, simply because the explanatory inert extension can always include the classical randomness used for mixing.

This observation opens up the possibility of constructing a convex resource theory of dynamical non-Markovianity, defined in terms of \textit{genuine} backflows of information from the environment to the system, while excluding non-causal information revivals. Another possibility is to consider the situation where the system and the reference do not start in the maximally entangled state, but in an arbitrary bipartite mixed state, as done in~\cite{Buscemi2014,megier-entropy-2022}. Furthermore, our findings appear to be related to the problem of how conditional mutual information propagates beyond the causal light cone in an open quantum system~\cite{lee2024universal}. These and other lines of research will be explored elsewhere.

\section{Conclusion \label{sec:Conclusion}}
Non-Markovianity in an open quantum dynamics of a system is often characterized by violations of data-processing inequalities, which are interpreted as information revival and commonly associated with the backflow of information from the environment to the system. In this article, we argued that not all instances of information revival are due to information backflow. To clarify this distinction, we introduced the concept of non-causal revivals, which refer to revivals in non-Markovian quantum processes that can be explained by the presence of an auxiliary system, remaining causally separate from the process. These revivals are artifacts that arise from neglecting the auxiliary system, which may be part of the overall system but does not participate in its evolution.

We presented a necessary and sufficient condition for identifying non-causal revivals using squashed quantum non-Markovianity, which measures genuine quantum non-Markovianity or conditional bipartite entanglement in tripartite quantum systems. This condition is robust under small deviations, ensuring the reliability of the notion of non-causal information revivals. In contrast, any information revival that cannot be explained by non-causal revivals is attributed to genuine information backflow from the environment to the system, indicating the presence of genuine non-Markovianity in dynamics. We derived an operational condition for witnessing genuine backflow and compared it with existing results in the literature. Furthermore, we demonstrated that a probabilistic mixture of Markovian processes can only result in non-causal information revivals and cannot exhibit genuine non-Markovianity. Consequently, the genuine non-Markovianity is a convex property, resolving a long-standing open problem in the field. This finding opens up the possibility of developing a convex resource theory to characterize genuine non-Markovianity as a quantum resource.

In conclusion, our study improves the fundamental understanding of quantum non-Markovianity in open quantum dynamics. Particularly, it demonstrates that there can be apparent non-Markovianity without any backflow of information from the environment to the system and prescribes operational conditions for genuine non-Markovianity. In contrast to common understanding, we have shown that convex mixtures of Markovian dynamics can exhibit information revivals, but this form of non-Markovianity is not genuine. Opening up the possibility of a convex resource theory of genuine non-Markovianity, our results enable a systematic characterization of non-Markovianity as a resource and may find important implications in several closely related areas, including quantum thermodynamics, quantum communication, quantum dynamics, and causality.

\medskip
\textit{Acknowledgments.}---F.B. acknowledges support from MEXT Quantum Leap Flagship Program (MEXT QLEAP) Grant No.~JPMXS0120319794, from MEXT-JSPS Grant-in-Aid for Transformative Research Areas (A) ``Extreme Universe'' No.~21H05183, and from JSPS KAKENHI, Grant No.~23K03230. R.G. thanks the Council of Scientific and Industrial Research (CSIR), Government of India, for financial support through a fellowship (File No. 09/947(0233)/2019-EMR-I).  K.G. is supported by the Hong
Kong Research Grant Council (RGC) through the grant
No. 17307520, John Templeton Foundation through
grant 62312, “The Quantum Information Structure
of Spacetime” (qiss.fr).  H.B. thanks IIIT Hyderabad for its warm hospitality during his visit in the initial phase of this work. S.D. acknowledges support from the Science and Engineering Research Board, Department of Science and Technology (SERB-DST), Government of India, under Grant No. SRG/2023/000217 and the Ministry of Electronics and Information Technology (MeitY), Government of India, under Grant No. 4(3)/2024-ITEA. S.D. also thanks IIIT Hyderabad for the Faculty Seed Grant.

\medskip
\textit{Authors' contributions.}---F.B., R.G., S.D., and M.N.B. conceived the ideas. F.B., K.G., S.D., and M.N.B. conducted the investigation and developed the methodology, with R.G. also contributing to the methodology development. K.G. contributed to the visualization of the results. The manuscript was written by F.B., R.G., and M.N.B. All authors participated in the discussions, analysis, and validation of results, as well as in writing reports and editing the manuscript during the review process. F.B. and M.N.B. supervised the overall project.

\appendix

\refstepcounter{section}
\section*{APPENDIX A: DETAILS FOR SEC.~\ref{sec:example}}\label{app:revival}
\setcounter{section}{1}
Let us consider the three-time snapshot scenario Eqs.~\eqref{eq:3-point-conf-0}-\eqref{eq:3-point-conf-2}, and consider the initial state of the composite reference-system-environment given by
\begin{align}
&\Phi^+_{RQ}\otimes\Phi^{(4)}_{E\widetilde{E}} \label{eq:example}\\
\xrightarrow{t_1}\ &U_{QE}\;(\Phi^+_{RQ}\otimes\Phi^{(4)}_{E\widetilde{E}})\;U_{QE}^\dag \nonumber\\
&=\frac14\sum_{i,j}\pi^i_Q\Phi^+_{RQ}\pi^j_Q\otimes \ket{i}\bra{j}_E\otimes \ket{i}\bra{j}_{\widetilde{E}} \nonumber\\
\xrightarrow{t_2}\ &U_{QE}\left(\frac14\sum_{i,j}\pi^i_Q\Phi^+_{RQ}\pi^j_Q\otimes \ket{i}\bra{j}_E\otimes \ket{i}\bra{j}_{\widetilde{E}}\right) U_{QE}^\dag \nonumber\\
&=\Phi^+_{RQ}\otimes\Phi^{(4)}_{E{\widetilde{E}}}\;,\nonumber
\end{align}
where $\Phi^{(4)}_{E\widetilde{E}}$ denotes the four-dimensional maximally entangled state purifying $\openone_E/4$, $\Phi^+_{RQ}$ is maximally entangled two qubit state and $U_{QE}=\sum_{i=0}^3\pi^i_Q\otimes\ketbra{i}_E$ is a control unitary operation where $\pi^1_Q=X_Q$, $\pi^2_Q=Y_Q$, and $\pi^3_Q=Z_Q$ are the Pauli matrices and $\pi^0_Q=\openone_Q$.

The crucial point to emphasize here is that the ancilla $\widetilde{E}$, although correlated with $E$, \textit{never directly} interacts with $Q$. For this reason, there cannot be \textit{any} direct backflow of information from $\widetilde{E}$ to $Q$, which would require that some information flowed from $Q$ into $\widetilde{E}$ in the first place. In this case, nonetheless, the ancilla $\widetilde{E}$ alone is able explain the revival, because $I(R;Q\widetilde{E})_0=I(R;Q'\widetilde{E})_1=I(R;Q''\widetilde{E})_2=2$ bits. This can be easily proven since $I(R;Q'\widetilde{E})_1=I(R;\widetilde{E})_1+I(R;Q'|\widetilde{E})_1=0+2=2$ bits. Note that the same conclusion holds if, instead of maximally entangled, $\widetilde{E}$ is just perfectly (classically) correlated with $E$.

\refstepcounter{section}
\section*{APPENDIX B: FACTS ABOUT SQUASHED AND PUFFED CORRELATION MEASURES}\label{App:squashed-puffed}
\setcounter{section}{2}
The \textit{squashed entanglement} is defined as~\cite{christandl-winter-2004-squashed}
\begin{align}
    E_{\sq}(A;B)_{\rho}\coloneqq \dfrac{1}{2} \inf_{\omega_{ABE}\coloneqq\Tr_E{\omega_{ABE}}=\rho_{AB}} I(A;B|E)_{\omega}\;,
\end{align}
where the optimization is done over all possible state extensions $\omega_{ABE}$ of $\rho_{AB}$ and, in general, the dimension of the extension $E$ is unbounded~\cite{christandl-winter-2004-squashed}. Squashed entanglement is an entanglement monotone, i.e., nonincreasing under the action of LOCC (local quantum operations and classical communication) channels. For finite-dimensional bipartite systems, the squashed entanglement is zero if and only if the state is separable~\cite{christandl-winter-2004-squashed,faithful-squashed-entanglement}, i.e., $E_{\rm sq}(A;B)_{\rho}=0$ if and only if the state $\rho_{AB}$ can be expressed as a convex mixture $\sum_xp(x)\sigma^x_A\otimes\omega^x_B$ of product states. 

Given a tripartite state $\rho_{ABC}$, the \textit{squashed non-Markovianity} between $A$ and $C$ conditioned on $B$ is defined as~\cite{Gangwar2024},
\begin{equation}
    N_{\rm sq}(A;C|B)_{\rho}\coloneqq\frac{1}{2}\inf_{\omega_{ABCE}:\Tr_E\omega_{ABCE}=\rho_{ABC}}I(A;C|BE)_{\omega}\;,\label{eq:sqnm}
\end{equation}
where optimization is done with respect to all possible state extensions $\omega_{ABCE}$ of $\rho_{ABC}$. By definition, it thus follows that, for an arbitrary tripartite state $\rho_{ABC}$,
\begin{equation}
    0\leq E_{\rm sq}(A;C)_{\rho}\le N_{\rm sq}(A:C|B)_{\rho}\;.\label{eq:sq-nm}
\end{equation}
Therefore, $N_{\rm sq}(A;C|B)_{\rho}=0$ implies that $E_{\rm sq}(A;C)_{\rho}=0$, i.e., it implies that $\rho_{AC}$ is separable.

The \textit{puffed entanglement} is defined as~\cite{oppenheim2008paradigm}
\begin{align}
    E_{\mathrm{puff}}(A;B)_{\rho}\coloneqq \dfrac{1}{2} \sup_{\omega_{ABE}:\Tr_E{\omega_{ABE}}=\rho_{AB}} I(A;B|E)_{\omega}\;,
\end{align}
where the optimization is done over all possible state extensions $\omega_{ABE}$ of $\rho_{AB}$. It is not an entanglement monotone. For an arbitrary state $\rho_{AB}$, we have
\begin{equation}\label{eq:puffed-bound}
     E_{\mathrm{puff}}(A;B)_{\rho}\leq \min\{H(A)_{\rho},H(B)_{\rho}\}\;.
\end{equation}
The above relation can be easily proved as follows~\cite{private-decoupling,buscemi-AQIS18}. For any given state extension $\omega_{ABE}$ of $\omega_{AB}$, let $\Psi_{ABEF}$ be a purification of $\omega_{ABE}$, i.e., $\Psi_{ABEF}$ is a pure state such that $\Tr_F\Psi_{ABEF}=\omega_{ABE}$. Then, by straightforward manipulation of entropies, we see that
\begin{align*}
    \frac12 I(A;B|E)_{\omega}=H(A)-\frac{I(A;E)_\Psi+I(A;F)_\Psi}{2}\le H(A)\;.
\end{align*}
By symmetry, also $\frac12 I(A;B|E)_{\omega}\le H(B)$ holds.

\refstepcounter{section}
\section*{APPENDIX C:CONVEXITY OF NONCAUSAL REVIVALS}\label{App:convexity}
\setcounter{section}{3}
Let us consider two processes: process (a)
\begin{align}
    \Phi^+_{RQ}\xrightarrow{t_1} \sigma_{RQ'}^{(a)}\xrightarrow{t_2}\tau^{(a)}_{RQ''}\;,
\end{align}
and process (b)
\begin{align}
\Phi^+_{RQ}\xrightarrow{t_1} \sigma_{RQ'}^{(b)}\xrightarrow{t_2}\tau^{(b)}_{RQ''}\;,
\end{align}
and let us begin by assuming that neither of the processes (a) and (b) exhibits a revival, i.e., both $I(R;Q'')_2^{(a)} \leq I(R;Q')_1^{(a)}$ and $I(R;Q'')_2^{(b)} \leq I(R;Q')_1^{(b)}$ hold. Now we construct a probabilistic (convex) mixture of these two processes, namely
\begin{align}\label{eq:convex-process}
\Phi^+_{RQ}\xrightarrow{t_1}\sum_{x\in\{a,b\}} \ p_{x} \sigma_{RQ'}^{(x)}\xrightarrow{t_2}\ \sum_{x\in\{a,b\}} p_x\tau^{(x)}_{RQ''}\;,    
\end{align}
where $p_x\in(0,1)$ are the probabilities and $\sum_{x\in\{a,b\}} \ p_x=1$.

As it is well known, even if the two processes do not display any information revival, their convex mixture in~\eqref{eq:convex-process} may very well do so~\cite{Breuer_2018-mixing-induced}. This means, we may have $I(R;Q'')_2 > I(R;Q')_1$. Indeed, a convex mixture constitutes a process with long-term classical memory: at the beginning, the process is randomly chosen, and the system keeps evolving along that branch until the end. The example described in Section~\ref{sec:example} has exactly this property.

Let us now introduce, for each process, a compatible interaction model; for example
\begin{align}
    \Phi^+_{RQ}\otimes\gamma_E^{(a)}\xrightarrow{t_1} \sigma_{RQ'E'}^{(a)}\xrightarrow{t_2}\tau^{(a)}_{RQ''E''}
\end{align}
and
\begin{align}
    \Phi^+_{RQ}\otimes\gamma_E^{(b)}\xrightarrow{t_1} \sigma_{RQ'E'}^{(b)}\xrightarrow{t_2}\tau^{(b)}_{RQ''E''}\;.
\end{align}

Using the above interaction models we can also construct an interaction model for the mixed process~\eqref{eq:convex-process} as follows
\begin{align}
  &\Phi^+_{RQ}\otimes\sum_x \ p_x\gamma_E^{(x)}\otimes\ketbra{x}_{\widetilde{E}}\\
  \xrightarrow{t_1}&\sum_x \ p_{x} \sigma_{RQ'E'}^{(x)}\otimes\ketbra{x}_{\widetilde{E}}\\
  \xrightarrow{t_2}&\sum_x p_x\tau^{(x)}_{RQ''E''}\otimes\ketbra{x}_{\widetilde{E}}\;,  
\end{align}
where \{$\ket{x}_{\widetilde{E}}\}$ are the orthonormal states of system $\widetilde{E}$, which is also part of the environment. The joint system--interaction is given by the controlled unitary operators
\begin{align}
    U_{QE\widetilde{E}}= \sum_xU_{QE}^{(x)}\otimes\ketbra{x}_{\widetilde{E}}
\end{align}
for the evolution $t_0 \to t_1$, and 
\begin{align}
    V_{Q'E'\widetilde{E}}= \sum_xV_{Q'E'}^{(x)}\otimes\ketbra{x}_{\widetilde{E}}\;
\end{align}
for the evolution $t_1 \to t_2$.

Since we have obtained a controlled-unitary interaction model with the control as a classical memory, it is now easy to find the inert extension showing that, whatever the information revival occurring in~\eqref{eq:convex-process}, it is non-causal. Such an inert extension simply is another system $\widetilde{F}$, perfectly correlated with $\widetilde{E}$, following the intuition of Section~\ref{sec:example}. Indeed, the intermediate state (at $t_1$) for the extended mixed process becomes
\begin{align}
    \sum_x \ p_{x} \sigma_{RQ'E'}^{(x)}\otimes\ketbra{x}_{\widetilde{E}}\otimes\ketbra{x}_{\widetilde{F}}\;,
\end{align}
where \{$\ket{x}_{\widetilde{F}}\}$ are the orthonormal states of $\widetilde{F}$. We let it evolve to $t_2$. 

Clearly, the system $\widetilde{F}$ is not involved in the evolution and never interacts with $Q$: it is inert. And yet, we have
\begin{align}
    &I(R;Q'\widetilde{F})_1-I(R;Q''\widetilde{F})_2\\
    =&I(R;Q'|\widetilde{F})_1-I(R;Q''|\widetilde{F})_2\\
    =&\sum_xp_x[I(R;Q')^{(x)}_1-I(R;Q'')^{(x)}_2]\\
    \ge& 0\;. \label{eq:ExtDPI}
\end{align}
Here, in the first equality, we used the fact that the $I(R;\widetilde{F})_1=I(R;\widetilde{F})_2=0$, as the systems $R$ and $\widetilde{F}$ were uncorrelated initially (at $t_0$) and never interacted thereafter.
The second equality is the consequence of the classical state of the conditioning system $\widetilde{F}$, for instance, $I(R;Q'|\widetilde{F})_1 = \sum_x p_x I(R;Q')^{(x)}_1$. The third inequality is the result of the initial assumption that $I(R;Q')^{(x)}_1-I(R;Q'')^{(x)}_2 \geq 0$, for $x\in\{a,b\}$. 

Along the same lines, we can also show that if process (a) and process (b) have non-causal information revivals, their convex mixtures remain non-causal. In this case, let $F$ be the inert extension that explains the revivals, so that the situation for process (a) is
\[
\Phi^+_{RQ}\otimes\gamma_{EF}^{(a)}\xrightarrow{t_1} \sigma_{RQ'E'F}^{(a)}\xrightarrow{t_2} \tau^{(a)}_{RQ''E''F}\;,
\]
and for process (b)
\[
\Phi^+_{RQ}\otimes\gamma_{EF}^{(b)}\xrightarrow{t_1} \sigma_{RQ'E'F}^{(b)}\xrightarrow{t_2} \tau^{(b)}_{RQ''E''F}\;.
\]
Note that the inert extension $F$ can be taken to be the same \textit{system} (just take it to be the largest one of the two), although, of course, its \textit{state} may differ, i.e., $\gamma_F^{(a)}\neq \gamma_F^{(b)}$.
By definition of non-causal revivals, we have $I(R;Q'F)^{(a)}_1\ge I(R;Q''F)^{(a)}_2$ and $I(R;Q'F)^{(b)}_1\ge I(R;Q''F)^{(b)}_2$.

Again, let $\widetilde{E}$ be a classical control and $\widetilde{F}$ a perfectly correlated copy, the latter remaining completely inert through the whole process. Then, also in this case, we have:
\begin{align}
    &I(R;Q'F\widetilde{F})_1-I(R;Q''F\widetilde{F})_2\\
    =&I(R;Q'F|\widetilde{F})_1-I(R;Q''F|\widetilde{F})_2\\
    =&\sum_xp_x[I(R;Q'F)^{(x)}_1-I(R;Q''F)^{(x)}_2]\\
    \ge& 0\;.
\end{align}

We have thus shown that any convex mixture of processes without revivals or with revivals that are non-causal cannot have any genuine backflow. Therefore, processes without genuine backflow (including processes without revivals and processes with non-causal revivals) form a convex set.

\bibliography{library}   
\end{document}